\documentclass[conference]{IEEEtran}

\usepackage{amsmath,amssymb,amsthm}
\usepackage{mathtools}

\usepackage{graphicx}
\usepackage{epsfig} 

\usepackage{array}
\usepackage{booktabs}

\usepackage{cite}

\usepackage{algorithmic}
\usepackage[linesnumbered,ruled,vlined]{algorithm2e}

\usepackage{xcolor}
\newtheorem{theorem}{Theorem}

\newtheorem{proposition}{Proposition}

\theoremstyle{definition}
\newtheorem{definition}{Definition}
\theoremstyle{remark}

\usepackage{comment}

\newcommand{\defeq}{\vcentcolon=}
\newcommand{\LT}{L^{\raisebox{0.2ex}{\ensuremath{\scriptscriptstyle (T)}}}}

\newcommand{\RT}{R^{\raisebox{0.2ex}{\ensuremath{\scriptscriptstyle T}}}}
\newcommand{\YT}{Y^{\raisebox{0.2ex}{\ensuremath{\scriptscriptstyle T}}}}
\newcommand{\YTone}{Y^{\raisebox{0.2ex}{\ensuremath{\scriptscriptstyle T-1}}}}

\newcommand{\YsubT}{Y_{\ensuremath{\scriptscriptstyle T}}}

\allowdisplaybreaks

\IEEEoverridecommandlockouts
\title{Realisation-Level Privacy Filtering}
\author{\IEEEauthorblockN{Sophie~Taylor, Praneeth~Kumar~Vippathalla, and Justin~P.~Coon}
\thanks{The authors are with the Department of Engineering Science, University of Oxford, Oxford, U.K (e-mail: sophie.taylor2@balliol.ox.ac.uk; praneeth.vippathalla@eng.ox.ac.uk; justin.coon@eng.ox.ac.uk). This research was funded in whole or in part by the Engineering and Physical Sciences Research Council under grant number EP/W524311/1, and the U. S. Army Research Laboratory and the U. S. Army Research Office under grant number W911NF-22-1-0070. For the purpose of Open Access, the authors have applied a CC BY public copyright license to any Author Accepted Manuscript (AAM) version arising from this submission.}}
\begin{document}
\maketitle
\IEEEpubidadjcol

\begin{abstract}
    We study differentially private data release, where a database is accessed through successive, possibly adaptive queries and mechanisms.
    Existing composition theorems and privacy filters almost always combine worst case per-round privacy parameters, leaving room for more refined accounting based on realised privacy loss,
    which we term realisation-level accounting.
    We present a realisation-level filtering approach to determine stopping times for data releases, and design one such filter.
    Despite technical challenges arising from conditioning on realisations and stopping time, we prove that the filter guarantees $(\epsilon, \delta)$-differential privacy, with $\epsilon$ and $\delta$ chosen by the data handler.
    Through numerical evidence, we demonstrate that
    realisation-level filtering provides a path to better utility beyond mechanism-level methods.
    Furthermore, our proposed filter applies to arbitrary mechanisms, including those that are badly behaved under R{\'e}nyi differential privacy.

    Note: This manuscript supersedes the earlier ISIT conference version. It contains updates and revisions. Readers are encouraged to use and cite this version.
\end{abstract}

\section{Introduction}

In most privacy preserving applications, a database is  subject to successive queries, and is therefore accessed more than once.
In modern settings, queries and mechanisms are often adaptive, meaning they depend on previously released outputs.
A key example is federated learning, where model training involves repeated access to data.
In such systems, the sequence of queries and mechanisms may be generated by an adaptive training procedure.
Moreover, each data access incurs a privacy loss, making adaptive privacy accounting essential.
Hence,
it is crucial that a system designer can quantify how privacy guarantees compose under multiple adaptive mechanism uses.
Throughout this work, we take differential privacy (DP) as the privacy notion.

Existing approaches to privacy composition provide powerful guarantees in many contexts, but can be overly conservative in certain settings.
Classical composition theorems bound the cumulative privacy loss without knowledge of mechanism outputs, commonly by combining known per-mechanism parameters \cite{Dwork2_2006, Dwork3_2010, dwork2014textbook}.
Notably, Rényi differential privacy (RDP) provides  strong composition guarantees by leveraging the distribution of the privacy loss \cite{RDP} and more recent FFT \cite{FFT2020, FFT2021} and saddle-point based \cite{saddle2023} approaches refine this by directly targeting the tail probability of the privacy loss, the quantity that governs DP.\footnote{As typically presented, saddle-point approaches use asymptotic approximations rather than explicit upper bounds, making them more naturally suited to evaluating privacy loss than to guaranteeing strict DP bounds. FFT-based convolution methods rely on discretisation and incur non-constant runtime.}
In practice, RDP is used as an accounting tool, with privacy guarantees converted back to DP for reporting.
Classical guarantees must be computed in advance and are independent of the realised mechanism outputs.
Given a privacy budget, the number of allowable releases must be determined uniformly over all possible mechanism choices, rather than tailored to the specific sequence used.
This can lead to very conservative stopping rules in adaptive scenarios.

To tackle the adaptive setting, researchers have proposed the use of privacy filters, which may be DP based \cite{rogers2021, filters2021} or RDP based \cite{Lecuyer2021Practical, filters2021}, and 
keep a running total of privacy loss to adaptively decide when to stop releases to stay within a privacy budget.
Existing privacy filters track the privacy loss of the sequence of realised mechanisms by combining their per round parameters.
This contrasts classical composition, which operates without knowledge of the particular realisations of adaptively chosen mechanisms.
We call this \emph{mechanism-level} accounting.
Despite its advantages, mechanism-level accounting relies on worst case privacy parameters, and does not exploit the fact that the realised loss may be significantly smaller.
In this work, we present a filtering approach that tracks privacy loss pointwise, operating at the \emph{realisation-level}.
While this may appear natural, doing so poses significant technical challenges, as differential privacy can be violated through conditioning on realised outputs. 
In particular, designing a stopping rule to ensure a privacy guarantee requires accounting for the privacy loss from halting the filter.
Whilst realisation-level privacy accounting was considered in the local differential privacy (LDP) setting in \cite{Pan}, it remains relatively unexplored.

In this work, we propose a privacy filtering approach based on realisation-level accounting. We design one such filter and prove that it satisfies $(\epsilon, \delta)$-DP.
The filter is generally applicable, and does not assume a particular class of mechanism.
Finally, we discuss its utility implications in terms of the allowed number of data releases.

\section{Privacy Filtering}\label{section: setup and notation}

Consider the adaptive data privacy problem setup in Figure \ref{fig: original setup}.
\begin{figure}[!htpb]
  \centering
\includegraphics[scale=0.35]{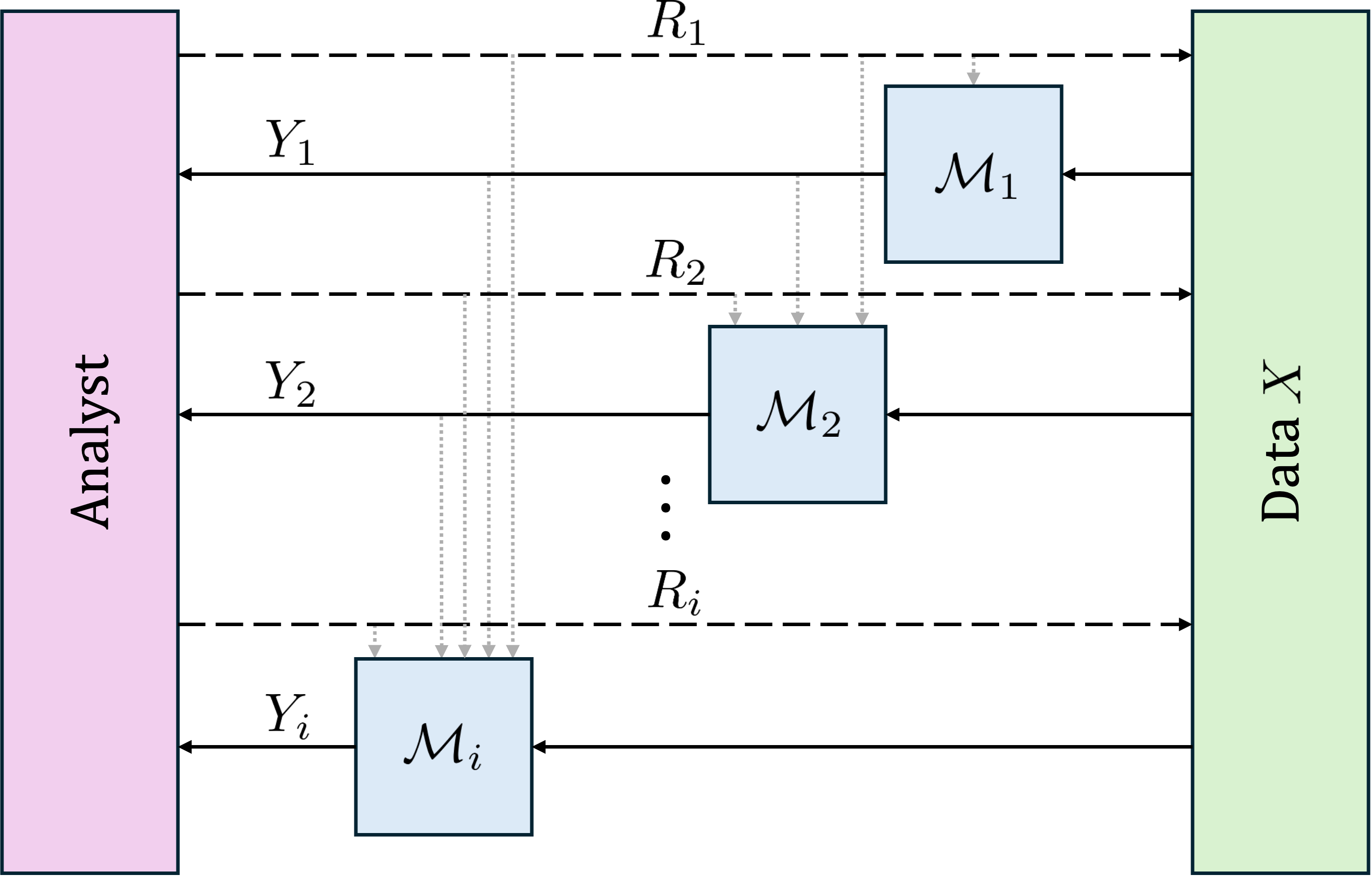}
  \caption{Adaptive data privacy problem}
  \label{fig: original setup}
\end{figure}
An analyst sends a data request $R_1$ from the set of allowable requests $\mathcal{R}_1$ to a database $X$. This is input to a privacy mechanism $\mathcal{M}_1$, which 
produces a random output
$Y_1 = \mathcal{M}_1(R_1, X)$.
Given this response, the analyst can make a second request $R_2 \in \mathcal{R}_2$.
Importantly, requests can be chosen adaptively, depending on previous outputs.
The second data release $Y_2$ is the output of $\mathcal{M}_2(R_1, R_2, Y_1, X)$.
The process continues so that the $i$th data release $Y_i$ is generated by $\mathcal{M}_i( R_1, \dots, R_i, Y_1, \dots, Y_{i-1}, X)$. We assume a fixed series of allowable sets $\mathcal{R}_1, \mathcal{R}_2 \dots$, 
and a fixed family of conditional distributions $P_{\scriptscriptstyle X}(Y_1|R_1), P_{\scriptscriptstyle X}(Y_1|R_1, R_2, Y_2), \dots$ defining the mechanisms $\mathcal{M}_1, \mathcal{M}_2, \dots$ a priori; the mechanisms are adaptive in the standard sense that, at each round, the distribution of the output is conditioned on all previous requests and outputs.

We use superscript indexing to denote the first $i$ requests $R^{i} := (R_1, \dots, R_{i})$ and the first $i-1$ outputs  $Y^{i-1} := (Y_1, \dots, Y_{i-1})$. For $i \geq 1$, we refer to $(R^i, Y^{i-1})$ as the \emph{partial transcript}  and $(R^i, Y^{i})$ as the \emph{full  transcript} at step $i$.
Conditioned on a realised partial transcript $(R^i, Y^{i-1})=(r^i, y^{i-1})$ and database $X=x$, the output $Y_i$ is distributed according to $P_x(y_i|r^i, y^{i-1})$. Throughout, the subscript $x$ indicates that the probabilities are conditioned on the database $X=x$.

To safely carry out this process, the data handler must control the privacy loss incurred by the sequence of released outputs.
Privacy preservation can be achieved through the use of a privacy filter.  A \emph{privacy filter} \cite{rogers2021} is a sequential algorithm that, after every request, decides whether to halt the data release process or proceed with accepting a new request. In other words, it tracks privacy loss to inform a stopping rule.
Let $T$ denote the filter's stopping time which is defined as the index of the last released output when the algorithm halts, and is a random variable. In order for the filter to guarantee $(\epsilon, \delta)$-DP, we need to consider the following definition.

\begin{definition}[$(\epsilon, \delta)$-DP Privacy Filter]\label{def:dp}
    Given a sequence of mechanisms and allowable requests, a privacy filter is said to be $(\epsilon, \delta)$-{DP} if, for every adversary employing a random and adaptive strategy to choose requests, the following holds:
    \begin{align}
    P_x \big( \big(\RT, \YT \big) \in \mathcal{S} \big) \leq e^\epsilon P_{x'}\left( \left( \RT, \YT \right) \in \mathcal{S} \right) + \delta,
    \end{align}
    for all neighbouring databases $x \sim x'$ and all measurable sets $\mathcal{S}$ of full transcripts.
\end{definition}
Note that we use the full transcript in our DP definition, despite the fact that requests are chosen by the adversary. 
This is because all random variables can be jointly used by the adversary to infer about the database.
By requiring privacy to hold for all measurable sets of full transcripts, we ensure that no adversary can distinguish neighbouring databases beyond the $(\epsilon, \delta)$-DP bound, however they select their requests.

\subsection{Mechanism-Level Privacy Accounting} \label{section: ML acounting}
The classical additive composition result of \cite{Dwork2_2006} enables a simple privacy filter. Let $(\epsilon, \delta)$ be the privacy budget. Suppose $(\mathcal{M}_1, \mathcal{M}_2, \dots, \mathcal{M}_{t-1})$ are the adaptive mechanisms applied so far, and at step $t$ a request $r_t$ is made with $\mathcal{M}_{t}$ being the corresponding mechanism. Then, the data handler might employ the so-called \emph{additive privacy filter} \cite{rogers2021} that checks if $\sum_{i=1}^{t}\epsilon_i \leq \epsilon$ and $\sum_{i=1}^{t}\delta_i\leq \delta$, where $(\epsilon_i, \delta_i)$ are the DP parameters of mechanism $\mathcal{M}_i(r^{i}, y^{i-1},X)$.  
If either condition is violated, the data handler will not apply $\mathcal{M}_{t}$ to $X$ and stops accepting new requests. Otherwise, she applies the mechanism, releases the corresponding data, and accepts a new request.

Using an advanced composition result for adaptive mechanisms, \cite[Thm~5.1]{rogers2021} gave an improved privacy filter (in certain regimes) that checks if $\delta' + \sum_{i=1}^{t}\frac{2 \delta_i}{\epsilon_i e^{\epsilon_i}} \leq \delta$ and $\epsilon$ exceeds
\begin{equation} \label{eq: advanced f}
\begin{aligned}
&\sum_{i=1}^t \frac{\epsilon_i \left(e^{\epsilon_i}-1\right)}{2}
+ \sqrt{
2\left(
\sum_{i=1}^t \epsilon_i^2
+ \frac{\epsilon^2}{28.04 \log(\frac{1}{\delta'})}
\right)
}
\\
&\;
\times \sqrt{
\left(
1 + \frac{1}{2}
\log\!\left(
\frac{28.04 \log(\frac{1}{\delta'})\sum_{i=1}^t \epsilon_i^2}{\epsilon^2}
+ 1
\right)
\right)
\log\left(\frac{2}{\delta'}\right).
}
\end{aligned}
\end{equation}
This will be referred to as an \emph{advanced privacy filter}.

Privacy filters based on RDP parameters have also been proposed \cite{Lecuyer2021Practical, filters2021}.
These track privacy loss in terms of R{\'e}nyi divergence of fixed order $\alpha > 1$.
Let $\rho_i(\alpha)$ denote the adaptive RDP parameter of the mechanism $\mathcal{M}_i(r^{i}, y^{i-1},X)$, which may depend on previous requests and outputs.
By known composition properties of RDP \cite{RDP}, the cumulative RDP parameter after $t$ steps is $\sum_{i=1}^t \rho_i(\alpha)$.
To employ an adaptive \emph{RDP privacy filter}, the data handler checks at each step whether $\sum_{i=1}^t \rho_i(\alpha) \leq \epsilon_\alpha$.
If the condition is violated, the filter stops and $\mathcal{M}_t$ is not applied.
From \cite{RDP}, this yields an $(\epsilon, \delta)$-DP guarantee, where $\epsilon = \epsilon_\alpha + \frac{\log\frac{1}{\delta}}{\alpha -1}$.
Importantly, $\alpha$ must in general be fixed before running the filter and cannot be optimised during its execution.
In the mechanism-level setting, the quantities $\rho_i(\alpha)$ are transcript-dependent and do not hold uniformly over all adaptive executions.
In contrast, under classical RDP composition, the composed RDP parameter does not depend on the execution, and an optimisation over $\alpha>1$ can safely be performed.

As these filters use mechanism parameters to track privacy loss, we call them \emph{mechanism-level} privacy filters. All are designed to guarantee $(\epsilon, \delta)$-DP.

\subsection{Realisation-Level Privacy Accounting}
Mechanism-level privacy accounting relies on worst case parameters for each mechanism, regardless of realised outputs. As a result, these filters may halt in executions where the realised privacy loss is significantly less than the budget. Therefore, like \cite{Pan}, we present another paradigm for a privacy filter design based on \emph{realisation-level} privacy accounting. In our work, the idea is to track the accumulated privacy loss of the full transcript $(r^i,y^{i})$ at the $i$th step. The privacy loss incurred in releasing the output $y_i$ at $i$th step given the partial transcript $(r^i,y^{i-1})$ is 
\[
l_i(x,x') := \log\frac{P_x(y_i\mid r^i,y^{i-1})}{P_{x'}(y_i\mid r^i,y^{i-1})}.
\]
The cumulative privacy loss at step $i$ corresponding to the full transcript $(r^i,y^i)$ is given by $l^{(i)}(x,x') \defeq l_1(x,x') + \dots + l_i(x,x')$. The associated random variables are denoted by $L_i(x,x')$ and $L^{(i)}(x,x')$.
For brevity, we will sometimes drop the explicit dependency on $x \sim x'$, so $L_i \defeq L_i(x,x')$.
We refer to these quantities as privacy loss.
A filter tracking the privacy loss this way has the potential to be less conservative by adapting to favourable outcomes, thereby allowing more data releases without compromising privacy. This intuition is formalised by our proposed privacy filter in Sec.~\ref{sec:algo} and the rest of the paper.

Before presenting any realisation-level privacy filters, we consider general aspects of their design. Upon stopping, the full-transcript privacy loss to an adversary is given by the random variable,
\begin{align*}
    \log \frac{P_x( \RT, \YT)}{P_{x'}(\RT, \YT)},
 \end{align*}
where $T$ is the stopping time of the algorithm, and $(\RT, \YT)$ is the full transcript the adversary has. 
By noting that $R_i$ cannot depend on $X$ given $(R^{i-1}, Y^{i-1})$, we can write
 \begin{align*}
    \log &\frac{P_x( \RT, \YT)}{P_{x'}(\RT, \YT)} \nonumber\\
    &=  \log \frac{\prod_{i=1}^{T} P_{x}(R_i|R^{i-1}, Y^{i-1})P_{x}(Y_i|R^{i}, Y^{i-1})}{\prod_{i=1}^{T} P_{x'}(R_i|R^{i-1}, Y^{i-1})P_{x'}(Y_i|R^{i}, Y^{i-1})} \nonumber\\
    &=   \log \frac{\prod_{i=1}^{T} P_{x}(Y_i|R^{i}, Y^{i-1})}{\prod_{i=1}^{T} P_{x'}(Y_i|R^{i}, Y^{i-1})} \nonumber\\
    &= \log\frac{P_x(Y_1|R_1)}{P_{x'}(Y_1|R_1)} + \dots + \log\frac{P_x(\YsubT| \RT, \YTone)}{P_{x'}(\YsubT| \RT, \YTone)}\!\nonumber\\
    &=\LT(x,x').
 \end{align*}
Therefore, the filter bounds the privacy loss to an adversary at the stopping time by tracking and limiting the cumulative privacy loss.

Pan \cite{Pan} proposed a realisation-level filter, termed the \emph{Bayesian Privacy Filter}, in the context of pure LDP. The same construction extends to the $(\epsilon,0)$-DP setting by replacing the notion of neighbouring inputs with the standard DP neighbouring relation. 
Before executing a mechanism, the filter examines every possible output and rejects the query if any would cause the cumulative privacy loss to exceed $\epsilon$. 
The paper also presents an $(\epsilon,\delta)$-LDP construction that can similarly be adapted to the $(\epsilon,\delta)$-DP setting. In this construction, an arbitrary mechanism-level privacy filter is applied first. 
Once this filter stops, the remaining privacy budget is managed using the Bayesian Privacy Filter under a pure privacy guarantee.
In other words, later mechanisms are only executed if the probability of the privacy loss exceeding the remaining budget is zero. 

Motivated by the Bayesian Privacy Filter, one might naturally seek a similar realisation-level stopping rule that directly guarantees general 
$(\epsilon, \delta)$-DP.
In this vein, we consider the following \emph{naive} realisation-level privacy approach.
At each step $i$, upon receiving request $r_i$, it considers the accumulated privacy loss $l^{(i-1)}(x,x')$ from the realised full transcript $(r^{i-1},y^{i-1})$,
and analyses the mechanism $\mathcal{M}_i(r^i,y^{i-1}, X)$.
If the worst-case (over databases) probability that releasing $Y_i$ would cause the accumulated privacy loss to exceed $\epsilon$ is greater than $\delta$, i.e.,
\[ P_{x} \bigl( L_i(x,x') > \epsilon - l^{(i-1)}(x,x') \mid  r^i,  y^{i-1} 
\bigr) > \delta,\]
for any $x \sim x'$,
the filter halts and the mechanism is not executed; otherwise, the mechanism is applied and the output $y_i$ is released.
This construction ensures the stopping time leaks no information about $X$ beyond that jointly revealed by requests and outputs.

Despite 
its intuitive nature and handling of stopping decisions,
$(\epsilon, \delta)$-DP is not guaranteed. 
To see this, consider 
repeated application of an identical non-adaptive binary erasure mechanism,
where an adversary makes the same fixed request at each step, and observes the output of the mechanism.
Here, $Y=X$ with probability $p \leq \delta$ and $Y=\Delta$ otherwise.
These events correspond to infinite and zero privacy loss respectively. 
The filter halts at step $i$ iff $l^{(i-2)}(x,x')=0$ and $l^{(i-1)}(x,x')=\infty$ for one ordering of the neighbouring pair. In other words, the naive filter continues as long as only $\Delta$'s are observed.
Therefore, with probability one, the infinite privacy loss event occurs, and by Def.~\ref{def:dp}, guarantee is no better than $(\epsilon, 1)$-DP.
Owing to the difficulty in ensuring  $(\epsilon, \delta)$-DP guarantee, a realisation-level privacy filter must be carefully designed.

\section{A Realisation-Level Privacy Filter}\label{sec:algo}

We introduce a privacy filter that tracks cumulative privacy loss, and bounds it with a valid stopping rule.
Unlike the construction of Pan \cite{Pan}, whose approximate guarantee is inherited from a preceding mechanism-level privacy filter, our filter incorporates approximate differential privacy directly into the realisation-level stopping rule.
In step $i$, the request $r_i$ is received.
What follows is subtle but essential.
Rather than deciding whether to release $y_i$, the algorithm decides whether it will receive the \emph{following} request $r_{i+1}$,
using knowledge of $r^i$ and $y^{i-1}$.
Specifically, it uses a step-wise parameter $\hat{\delta}_{i+1}$ to assess the risk of releasing $y_{i+1}$ under the worst case request $r_{i+1}$ and database $x$.
Formally, $\hat{\delta}_{i+1}$ is given by
\begin{align}\label{eq: delta hat def}
\inf \Bigl\{ z \!  \in \! [0,1] : \!\!\!
\inf_{\substack{x\sim x' \\r_{i+1} \in \mathcal{R}_{i+1}}} \!\!\!
P_{x} \bigl( Y_i \in \tilde{\mathcal{Y}}_i(z) \mid  r^i,  y^{i-1} 
\bigr) \ge 1 - \theta \Bigr\},
\end{align}
where $\tilde{\mathcal{Y}}_i(z)$ is a function of $x,x'$ and $r_{i+1}$ and is defined as
\begin{align*}
    \left\{ y_i \!: P_{x} \bigl( L_{i+1}(x,x') \!>\! \epsilon - l^{(i)}(x,x') \mid  r^{i+1},  y^i\bigr ) \le z  \right\}.
\end{align*}
Note that the algorithm requires checking the condition $\hat\delta_{i+1} \leq \tilde\delta$. This is most easily done by confirming whether
\begin{align} \label{eq: better check}
    \inf_{\substack{x\sim x' \\r_{i+1} \in \mathcal{R}_{i+1}}} \!\!\!
P_{x} \bigl( Y_i \in \tilde{\mathcal{Y}}_i(\tilde\delta) \mid  r^i,  y^{i-1} 
\bigr) \ge 1 - \theta,
\end{align}
as \eqref{eq: better check} is true if and only if $\hat\delta_{i+1} \leq \tilde\delta$.
Whatever the outcome of this check, the algorithm then executes mechanism $i$ and releases $y_i$.
The privacy filter is outlined in Algorithm \ref{alg: pw with rv stopping rule v2}, where $r_0$ and $y_0$ are taken to be fixed ($r_0 = y_0 = \perp$), and $l_0=0$.
\begin{algorithm}[!htpb]
\label{alg: pw with rv stopping rule v2}
\caption{Realisation-level privacy filter}
\SetAlgoLined
\DontPrintSemicolon
Input: $x, \epsilon$, $\delta$, $\mathcal{M}_1, \mathcal{M}_2, \dots$, $\mathcal{R}_1, \mathcal{R}_2, \dots$

Choose $\tilde{\delta} \in [0, \delta]$, $\theta \in [0,1]$, and $N \in \mathbb{Z}^+$ such that $\tilde{\delta} + \theta (1- \tilde{\delta}) N \leq \delta$

Initialize $i=0$, $l^{(-1)} = 0$

\While{$i \leq N$}{

Receive $r_i$

Execute $\mathcal{M}_i( r^i, y^{i-1}, x)$ to obtain and release $y_i$.

\If{$\hat{\delta}_{i+1} > \tilde{\delta}$ (if \eqref{eq: better check} is false)}{
        \textbf{break}\;
    }

$l^{(i)}(x,x') \leftarrow l^{(i-1)}(x,x') + l_i(x,x'), \forall x \sim x'$

$i \leftarrow i+1$
}

\end{algorithm}

The look-ahead design of the stopping rule makes the filter $(\epsilon, \delta)$-DP, which is proved in Sec.~\ref{section: privacy}.
Formally, it ensures that the stopping event $\{T=i\}$ does not depend on the realised output $y_i$, which allows the stopping time to be decoupled from the privacy loss in the proof of Theorem~\ref{alg: pw with rv stopping rule v2}.

Finally, we remark that the algorithm requires a choice of parameters $(\tilde{\delta}, \theta, N)$. 
Any decision satisfying the condition $\tilde{\delta} + \theta (1-\tilde{\delta}) N \leq \delta$ preserves $(\epsilon, \delta)$-DP, but may yield substantially different stopping times.
Here, $N$ is the maximum stopping time. The parameter
$\theta$ introduces a relaxation in \eqref{eq: better check}, allowing the condition to hold with high probability rather than with probability $1$.
If $\theta = 0$, 
the privacy loss induced by $y_{i+1}$ is analysed for the worst possible realisation of $y_i$. Increasing $\theta$ softens this condition.
In turn $\tilde{\delta}$ must be reduced from $\delta$ to accommodate positive $\theta$.
We also see a trade off between $\theta$ and $N$. The maximum stopping time is allowed to be very large if $\theta$ is very small.
The best parameter choice may vary according to the sequence of mechanisms input to the algorithm.
In Sec. \ref{sec: utility}, we discuss parameter selection for a series of i.i.d Gaussian mechanisms, and find that the $(\tilde\delta, \theta)$ pair can be chosen optimally given $N$.
In fact, the idea may be extended to any set of i.i.d. mechanisms.

\section{Privacy Guarantee} \label{section: privacy}
In this section, we establish the privacy guarantee of our privacy filter. 
For brevity, we use $P_x (\mathcal{S})$ to mean $P_x ( (\RT, \YT ) \in \mathcal{S})$, and $\mathbb{E}_x$ to denote an expectation with respect to $P_x$.

\begin{theorem}
    The privacy filter described in Algorithm \ref{alg: pw with rv stopping rule v2} is $(\epsilon,\delta)$-DP for any sequence of mechanisms.
    \begin{proof}
        Let $\mathcal{A}(x,x') \defeq \{\LT(x,x') \leq \epsilon \}$. To prove the $(\epsilon, \delta)$-DP guarantee, it is enough to show that 
        \begin{align}
        P_x ( \mathcal{A}^c (x,x')) \leq \delta \quad \forall x \sim x', \label{eq: alg proof first part}
        \end{align}
        because of the following argument. 
        If $(\RT, \YT ) \in \mathcal{A}(x,x')$ then
        \(P_x\left(\RT,\YT\right) \leq e^\epsilon P_{x'}\left(\RT,\YT \right)
        \).
        Hence, for any measurable set $\mathcal{S}$, we have 
       \(         
       P_x ( \mathcal{S} \cap \mathcal{A}(x,x') ) \leq e^\epsilon P_{x'} ( \mathcal{S} \cap \mathcal{A}(x,x') )
        \). This combined with \eqref{eq: alg proof first part}
        yields
        \begin{align}
            P_x (\mathcal{S}) &= P_x (  \mathcal{S} \cap \mathcal{A}(x,x')) + P_x (  \mathcal{S} \cap \mathcal{A}^c(x,x')) \nonumber \\
            &\leq e^\epsilon P_{x'} (  \mathcal{S} \cap \mathcal{A}(x,x')) + P_x (  \mathcal{A}^c(x,x') ) \nonumber \\
            & \leq e^\epsilon P_{x'} (  \mathcal{S}) + P_x (  \mathcal{A}^c(x,x') ) \leq e^\epsilon P_{x'} (  \mathcal{S}) + \delta, 
        \end{align}
        for all neighbours $x \sim x'$, proving the $(\epsilon, \delta)$-DP guarantee. 
        
       Now we show \eqref{eq: alg proof first part}. 
       To achieve this, define the event $\mathcal{A}_i(x,x') \defeq \left\{ L^{(i)}(x,x') \leq \epsilon \right\}$ and the random variable   
       $F_i(x,x') \defeq {\mathbf{1}_{\mathcal{A}_i (x,x')}}/{(1 - \hat{\delta}_i)}$ for $i \geq 1$. 
       Note that the stopping event $T=i$ can be written as
       \begin{align*}
           \{T=i\} = \left\{ \hat{\delta}_{1} \leq \tilde{\delta},   \dots, \hat{\delta}_i \leq \tilde{\delta},\ \hat{\delta}_{i+1} > \tilde{\delta} \right\}.
       \end{align*}
        Now consider the expectation
        \begin{align}
            &\mkern -10mu\mathbb{E}_{x} \left[ F_i(x,x') \mathbf{1}_{\{T=i\}} \right] \nonumber \\
             &= \mathbb{E}_{x} \left[ \mathbb{E}_{x}  \left[  \frac{\mathbf{1}_{\mathcal{A}_i(x,x')} }{1-\hat{\delta}_i}\mathbf{1}_{\{T=i\}} \middle| R^i, Y^{i-1} \right] \right] \nonumber \\
             &= \mathbb{E}_{x} \left[ \mathbf{1}_{\{T=i\}} \frac{1}{1- \hat{\delta}_i} P_{x} \left( \mathcal{A}_i(x,x') \mid R^i, Y^{i-1} \right) \right] \label{eq: func} \\
             &= \mathbb{E}_{x} \left[ \mathbb{E}_{x} \left[ \mathbf{1}_{\{T=i\}} Z
             \mid R^i, Y^{i-2} \right] \right] 
             \label{eq: exp expansion},
        \end{align}
        where \eqref{eq: func} uses that fact that $\hat{\delta}_i$ is a function of $R^{i-1}, Y^{i-2}$ and $\mathbf{1}_{\{T=i\}}$ is a function of $R^{i}, Y^{i-1}$, and for brevity, in \eqref{eq: exp expansion} we use $Z \defeq P_x(\mathcal{A}_i | R^i, Y^{i-2}, Y_{i-1}) / (1- \hat{\delta}_i)$. Since $ Z\geq \mathbf{1}_{\{Z \geq 1\}} $,
        \begin{align}
            &\mathbb{E}_{x} \left[ \mathbf{1}_{\{T=i\}} Z | R^i, Y^{i-2} \right] \nonumber \\& \geq
            P_{x} \left(T=i , Z \geq 1 \mid R^i, Y^{i-2} \right)\nonumber
            \\
            & \geq
            P_{x} \left(T=i \mid R^i, Y^{i-2} \right)+P_{x} \left(  \left\{ Z \geq 1 \right\} \mid R^i, Y^{i-2} \right) - 1\nonumber\\
            & \geq P_{x} \left( T=i \mid R^i, Y^{i-2} \right) - \theta , \label{eq: inner exp bounded 1}
        \end{align}
        where \eqref{eq: inner exp bounded 1} follows from the definition of $\hat{\delta}_i$ \eqref{eq: delta hat def}  that given $(R^{i-1}, Y^{i-2})$,  $Z \geq 1$
        with probability at least $1- \theta$ for all $R_i$.
        By combining \eqref{eq: exp expansion}, and \eqref{eq: inner exp bounded 1}, we finally get
        \begin{align}
            \mathbb{E}_{x} \left[ F_i(x,x') \mathbf{1}_{\{T=i\}} \right] &\geq E_{x} \left[ P_{x}\left(T=i \mid R^i, Y^{i-2} \right) - \theta \right] \nonumber \\
            &= P_x(T=i) - \theta \label{eq: B proof 2},
        \end{align}
       for all $x \sim x'$.
       For an upper bound on the expression on the left-hand side of \eqref{eq: B proof 2}, note that if $T=i$, then $1- \hat{\delta}_i \geq 1 - \tilde{\delta}$. Thus,
       \begin{align} \label{eq: B proof 3}
           \mathbb{E}_{x} \left[ F_i(x,x') \mathbf{1}_{\{T=i\}} \right]&=\mathbb{E}_{x} \left[ \frac{\mathbf{1}_{\mathcal{A}_i(x,x') \cap \{T=i\}}}{1- \hat{\delta}_i} \right] \nonumber\\&\leq \frac{1}{1 - \tilde{\delta}} P_x(\mathcal{A}_i(x,x'), T = i).
       \end{align}
       Combining \eqref{eq: B proof 2} and \eqref{eq: B proof 3} yields
       \begin{align} \label{eq: B proof 4}
           P_x(\mathcal{A}_i(x,x') , T=i) \geq (1 - \tilde{\delta})(P_x(T=i) - \theta),
       \end{align}
       for all $x \sim x'$.
       
       We can finally return to $P_x(\mathcal{A}^c(x,x'))$. Following from \eqref{eq: B proof 4} yields
       \begin{align*}
           P_x(\mathcal{A}^c(x,x')) &= 1 - P_x(\mathcal{A}(x,x')) \\
           &= 1 - \sum_{i=0}^N P_x(\mathcal{A}_i(x,x') , T=i)  \\
           &\leq 1 - \sum_{i=1}^{N} (1- \tilde\delta)(P_x(T=i) - \theta) \\
           &= 1 - (1- \tilde\delta) + N \theta(1- \tilde\delta) \\
           &\leq \tilde\delta + \theta (1- \tilde\delta) N \\& \leq \delta,
       \end{align*}
       where we note that $P_x(\mathcal{A}_0(x,x') , T=0) =0$, and the final inequality is true by construction, proving the theorem.
    \end{proof}
\end{theorem}

\section{Utility of the Realisation-Level Filter}
\label{sec: utility}

Beyond ensuring differential privacy, a privacy filter is more useful if it can run for a longer, enabling many data releases without privacy compromise. 
Utility is thus naturally characterised by the stopping time $T$, the number of database queries before access is cut off. We focus on survival probabilities $P_x(T \geq t)$.
In this section, we numerically compare the utility of existing \emph{mechanism-level} filters \cite{rogers2021, filters2021, Lecuyer2021Practical} with our \emph{realisation-level} privacy filter, described by Algorithm \ref{alg: pw with rv stopping rule v2}.

We simulate the additive, advanced and RDP privacy filters outlined in Sec.~\ref{section: ML acounting}, alongside our realisation-level filter, for a sequence of i.i.d. Gaussian mechanisms.
Suppose $x = (x_1, \dots, x_n)$ and $x' = (x_1', \dots, x_n')$ with $x_j, x_j' \in \{0,1\}$ are two neighbouring datasets that differ only in one entry.
We consider counting queries of the form $r(x) = \sum_{j=1}^n x_j$, and i.i.d. Gaussian mechanisms with $Y_i = r(X) + Z_i$, where $Z_i \sim \mathcal{N}(0, \sigma^2)$, and $\sigma=2$.
We do not separately simulate the 
$(\epsilon, \delta)$-DP construction of Pan \cite{Pan}. 
For Gaussian mechanisms, the privacy loss is unbounded, so the probability of exceeding any finite privacy budget is strictly positive. Hence,
the Bayesian Privacy Filter, if invoked, rejects every query. 
Pan's construction therefore reduces to its underlying mechanism-level privacy filter, which is already represented by the mechanism-level baselines considered here.

We note a practical consideration when implementing mechanism-level privacy filters.
For the additive and adaptive DP-based filters, a given mechanism $\mathcal{M}_i(r_i,y_{i-1},X)$ may admit multiple valid $(\epsilon_i,\delta_i)$ pairs, requiring a choice of parameters at each round.
Similarly, a constant R{\'e}nyi divergence order $\alpha >1$ must be chosen for the RDP filter.
In the current setting, mechanisms are i.i.d. and are therefore all known in advance.
This allows us to select parameter values optimally across all rounds.
Accordingly, all mechanism-level simulations use parameter choices that maximise the stopping time.
This process, and hence the computation of mechanism-level stopping times, is outlined in Appendix~\ref{section: appendix MLST}.
Notably, in the i.i.d. setting, mechanism-level filtering reduces to classical composition. In contrast, realisation-level filters retain their adaptive nature.

Recall that we must fix $(N, \tilde\delta, \theta)$ before running the realisation-level filter.
We start by specifying the maximum stopping time $N$, which determines the maximum number of outputs that may be released. 
Unlike the other parameters, $N$ has a direct operational interpretation, and can therefore be chosen based on the intended use of the filter. 
It should be set to a reasonable upper bound on the number of outputs one expects to consider.
For i.i.d. Gaussian mechanisms, we can then choose $(\tilde\delta, \theta)$ optimally to maximise the stopping time.
\begin{proposition} \label{prop: equiv condition}
    In the i.i.d. Gaussian setting described above with unit sensitivity, i.e., $|r(x)-r(x')|=1 \; \forall x \sim x'$, the condition \eqref{eq: better check} for releasing $y_{i+1}$ is equivalent to $l^{(i-1)}(x,x') \leq 
    \kappa,\; \forall x \sim x'$, where
    \begin{align*}
        \kappa \defeq \epsilon - \frac{1}{\sigma^2} - \frac{1}{\sigma}\left( \Phi^{-1}(1- \tilde\delta) + \Phi^{-1}(1-\theta) \right),
    \end{align*} 
    and $\Phi$ is the cumulative density function of the standard normal distribution.
\end{proposition}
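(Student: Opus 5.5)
The plan is to compute everything explicitly, using the fact that in the Gaussian setting each one-step privacy loss is itself Gaussian under $P_x$. Its law does not depend on the past transcript, so \eqref{eq: better check} collapses into a deterministic threshold condition on $l^{(i-1)}(x,x')$.

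\textbf{Step 1 (law of one-step loss).} Write $\Delta \defeq r(x)-r(x') \in \{\pm 1\}$. For $y_i = r(x)+z_i$ we have
\[
l_i(x,x') = \frac{(y_i-r(x'))^2-(y_i-r(x))^2}{2\sigma^2} = \frac{1}{2\sigma^2} + \frac{\Delta z_i}{\sigma^2}.
\]
Under $P_x$, and conditionally on any partial transcript, $Z_i\sim\mathcal N(0,\sigma^2)$ is independent of the past. Since $\Delta^2=1$, we get $L_i(x,x') \sim \mathcal N\!\left(\tfrac{1}{2\sigma^2},\tfrac{1}{\sigma^2}\right)$ for every $i$ and every neighbouring pair. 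Because the mechanisms are non-adaptive and the query is the fixed counting query, the infimum over $r_{i+1}$ in \eqref{eq: better check} is trivial.

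\textbf{Step 2 (inner condition defining $\tilde{\mathcal Y}_i(\tilde\delta)$).} Fix $x\sim x'$. By Step 1,
\[
P_x\bigl(L_{i+1} > \epsilon - l^{(i)} \mid r^{i+1},y^i\bigr) = 1-\Phi\bigl(\sigma(\epsilon - l^{(i)}) - \tfrac{1}{2\sigma}\bigr).
\]
This is at most $\tilde\delta$ if and only if
\[
l^{(i)} \le c \defeq \epsilon - \frac{1}{2\sigma^2} - \frac{1}{\sigma}\Phi^{-1}(1-\tilde\delta).
\]
Hence $y_i\in\tilde{\mathcal Y}_i(\tilde\delta)$ if and only if $l^{(i-1)}+l_i \le c$. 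This is a half-line in $y_i$, since $l_i$ is affine in $y_i$.

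\textbf{Step 3 (outer condition).} Again by Step 1,
\[
P_x\bigl(Y_i\in\tilde{\mathcal Y}_i(\tilde\delta)\mid r^i,y^{i-1}\bigr) = P\bigl(L_i \le c - l^{(i-1)}\bigr) = \Phi\bigl(\sigma(c-l^{(i-1)}) - \tfrac{1}{2\sigma}\bigr).
\]
This is at least $1-\theta$ if and only if
\[
l^{(i-1)} \le c - \frac{1}{2\sigma^2} - \frac{1}{\sigma}\Phi^{-1}(1-\theta) = \kappa .
\]
Finally, an infimum over pairs $x\sim x'$ being at least $1-\theta$ is the same as the inequality holding for each pair. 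This gives the stated equivalence $l^{(i-1)}(x,x')\le\kappa$ for all $x \sim x'$.

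The computation itself is routine. The step I expect to need the most care is the handling of boundary and degenerate cases, so that each "iff" is exact:
\begin{itemize}
\item Strict versus non-strict inequalities in the threshold inversions. These are harmless because $\Phi$ is continuous and strictly increasing.
\item The cases $\tilde\delta\in\{0,1\}$ or $\theta\in\{0,1\}$, where $\Phi^{-1}(1)=+\infty$ and $\Phi^{-1}(0)=-\infty$. Here I would adopt these extended-real conventions and check that both sides are simultaneously always false or always true.
\item The base case $i=0$, with $l^{(-1)}=0$.
\end{itemize}
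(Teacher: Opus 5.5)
Your proposal is correct and follows the paper's appendix proof essentially step for step: the one-step loss is Gaussian, the inner tail condition becomes $l^{(i)}\le c$, and the outer condition becomes $l^{(i-1)}\le\kappa$. Your offset $\frac{1}{2\sigma}$ is the right one---the paper's displayed $\frac{2}{\sigma}$ is a typo, as its next threshold confirms. On the base case you flagged: at $i=0$ the output $y_0=\perp$ is deterministic, so $L_0\equiv 0$ is not Gaussian and the check reduces to $0\le c$ rather than $0\le\kappa$; the equivalence is therefore exact only for $i\ge 1$ (at $i=0$ it fails whenever exactly one of $c\ge 0$, $\kappa\ge 0$ holds), a point the paper's proof likewise passes over.
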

The proof of Proposition \ref{prop: equiv condition} can be found in Appendix~\ref{section: proof of equiv prop}.
Recall that there are only two neighbouring datasets within our setting, and that $l_i(x,x') = - l_i(x',x)$.
Therefore, from Proposition \ref{prop: equiv condition}, we can say
\begin{align*}
    P_x(T \geq t) \!=\! P_x \! \left( |L^{(1)}(x,x')| \leq \kappa, \dots, |L^{(t-2)}(x,x')| \leq \kappa \right) \!.
\end{align*}
Clearly, maximising $\kappa$ maximises the stopping time.
Taking $\theta = \frac{\delta - \tilde\delta}{N(1- \tilde\delta)}$, the optimal $(\tilde{\delta}, \theta)$ pair can be found by numerically minimising $\Phi^{-1}(1- \tilde\delta) + \Phi^{-1}(1-\theta)$.

Following this procedure, we set $(N, \tilde\delta, \theta)$ as $(48, 5.60 \times 10^{-4}, 9.17 \times 10^{-6})$, which yields the result in Fig. \ref{fig: simulations}.
\begin{figure*}[t]
  \centering
\includegraphics[scale=0.6]{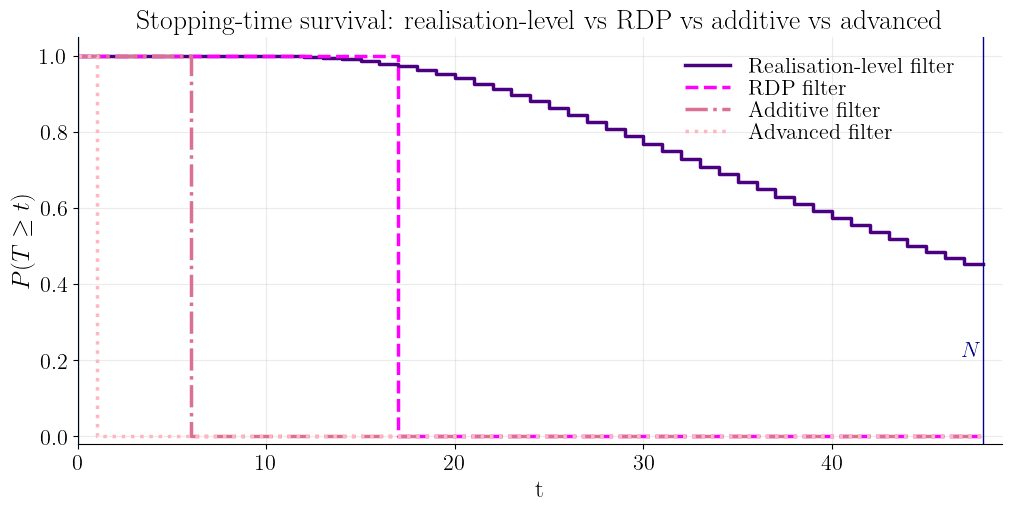}
  \caption{Stopping time survival $P(T \geq t)$ of mechanism-level privacy filters compared with our realisation-level privacy filter.}
  \label{fig: simulations}
\end{figure*}
We remark that, due to symmetry in our setting, $P_x(T \geq t) = P_{x'}(T \geq t)$.
Therefore, we simply plot $P(T \geq t)$ against $t$.
The primary observation is that the realisation-level filter generally dominates the mechanism-level filters, particularly for larger $t$.
The realisation-level filter admits a marginally lower survival probability than the RDP filter for a small number of early $t$ values.
This is consistent with its reliance on realised privacy loss values, which may be unusually high, occasionally triggering early stopping.
The effect is strongly outweighed in the long term.
We also remark the poor performance of the advanced filter, which is down to the particular mechanisms simulated.
The advanced composition theorem significantly outperforms additive composition in the small $\epsilon_i$ regime, which is not where this simulation operates.
Overall, the results highlight the potential for utility gains from privacy accounting at the realisation level.

\section{Discussion and Future Direction}

In this work, we developed privacy filtering at the realisation level, designed one such privacy filter, and proved that it guarantees 
$(\epsilon,\delta)$-differential privacy. 
Unlike mechanism-level approaches, our filter tracks realised privacy loss rather than composing worst-case mechanism parameters.
Previous realisation-level work \cite{Pan} considered pure privacy, and extended to approximate privacy by combining a pure LDP or DP filter with a mechanism-level privacy filter. 
In our work, the stopping rule itself is derived directly for general $(\epsilon, \delta)$-differential privacy.
It can be applied to arbitrary mechanisms, whether they be independent or adaptive, with discrete or continuous outputs that may have differing supports.
This extends beyond settings where R{\'e}nyi divergence is well behaved, as required by RDP filters.
This work demonstrates that realisation-level accounting provides a relatively unexplored axis for utility improvement, by exploiting knowledge of the realised privacy loss.

The main implementation challenge of the filter is in checking the condition $\hat{\delta}_{i+1} > \tilde\delta$.
Whilst analytical computation is perfectly feasible in simple settings, efficient approximations or bounds may be required for more elaborate mechanisms.
Recall the definition \eqref{eq: delta hat def} of $\hat\delta_{i+1}$  and the equivalent check \eqref{eq: better check}.
Given a triple $(x,x', r_{i+1})$, let $M$ denote the number of basic operations required to determine whether $P_{x} \bigl( Y_i \in \tilde{\mathcal{Y}}_i(\tilde\delta) \mid  r^i,  y^{i-1} 
\bigr) \ge 1 - \theta$.
Then, the number of basic operations required to check $\hat\delta_{i+1} > \tilde\delta$ is at most $C=\sum_{x \in \mathcal{X}}\mathcal{N}(x) | \mathcal{R}_{i+1} | M$, where $\mathcal{N}(x)$ counts the number of neighbours of database $x$.
In the i.i.d.\ Gaussian example of Sec. \ref{sec: utility}, $M=1$ and $C = 2$. 
In general however, this bound can be much larger. For complex, non-Gaussian mechanisms, the probability calculations may require Monte Carlo simulation with large sample sizes, which can substantially increase $M$.
For the RDP filter, the main source of computational complexity lies in evaluating $\rho_i(\alpha)$ for each mechanism, where $\rho_i(\alpha)$ must bound the R{\'e}nyi divergence of order $\alpha$
for all pairs of neighbouring datasets. 
This is fundamentally hard when the divergence itself admits no closed-form expression.
Existing filters face a similar scaling issue to ours, as their bounds must hold over all neighbouring datasets. This can become costly as the domain grows, especially for complex mechanisms.
Our filter additionally requires bounds to hold for all requests in $\mathcal{R}_{i+1}$. 
Unlike datasets, this set may be controlled by the data handler, who can reject other requests and safely continue.
The effect of this strategy on utility warrants further study, particularly in adaptive learning settings where requests may be difficult to predict.

Another challenge in implementing our realisation-level filter is parameter selection.
The filter requires prior specification of $(\tilde{\delta}, \theta, N)$.
Proposition \ref{prop: equiv condition} reveals that, in the i.i.d. Gaussian setting, the stopping condition is equivalent to the accumulated privacy loss exceeding a threshold $\kappa$.
In fact, this extends to general i.i.d. mechanisms for suitable $\kappa$.
Thus, for fixed $N$, maximising $\kappa$ over $(\tilde\delta, \theta)$ yields the maximum stopping time for any transcript.
General optimal selection of filter parameters remains open.
One possible approach is to approximate the sequence of mechanisms by an i.i.d. model, and select parameters accordingly.
This may offer useful intuition and a principled alternative to arbitrary parameter selection. However, for mechanisms exhibiting strong dependence, an i.i.d. model may be inaccurate.

Exploring filters that combine realisation-level accounting with Rényi-based composition is a promising direction for future work.
R{\'e}nyi differential privacy is known to provide very strong composition guarantees for light-tailed mechanisms, such as those based on Gaussian noise, and existing RDP filters leverage this property to achieve high utility for some common mechanisms.
The simulation results in this work illustrate that realisation-level accounting can yield significantly improved stopping time behaviour, even over RDP-based methods, by exploiting knowledge of realised privacy loss.
In summary, our work provides evidence that realisation-level accounting provides a complementary path to higher utility.

\appendices
\section{Proof of Proposition \ref{prop: equiv condition}} \label{section: proof of equiv prop}
For i.i.d. Gaussian mechanisms with unit sensitivity, the privacy loss variables are themselves i.i.d. and Gaussian. 
In particular, since $Y_i = r(X) + Z_i$ with $Z_i \sim \mathcal{N}(0, \sigma^2)$, it follows that for any neighbouring inputs $(x,x')$,
\[
L_i(x,x') \sim \mathcal{N} \left( \frac{1}{2 \sigma^2}, \frac{1}{\sigma^2} \right),
\]
under the distribution $P_x$ of $Y_i$.
Recall from \eqref{eq: better check} that at step $i$, Algorithm \ref{alg: pw with rv stopping rule v2} decides to release $y_{i+1}$ if
\begin{align*}
    \inf_{\substack{x\sim x' \\r_{i+1} \in \mathcal{R}_{i+1}}} \!\!\!
P_{x} \bigl( Y_i \in \tilde{\mathcal{Y}}_i(\tilde\delta) \mid  r^i,  y^{i-1} 
\bigr) \ge 1 - \theta,
\end{align*}
where $\mathcal{Y}_i(\tilde\delta)$ is defined as
\begin{align} \label{eq: yi tilde def appendix}
    \left\{ y_i \!: P_{x} \bigl( L_{i+1}(x,x') \!>\! \epsilon - l^{(i)}(x,x') \mid  r^{i+1},  y^i\bigr ) \le \tilde\delta  \right\}.
\end{align}
Let us first examine $\tilde{\mathcal{Y}}_{i}(\tilde\delta)$, dropping the explicit dependencies on $x \sim x'$ for brevity.
The inner probability in \eqref{eq: yi tilde def appendix} is
\begin{align} \label{eq: y tilde inner prob}
    P_{x} \bigl( L_{i+1} > \epsilon - l^{(i)} \mid  l^{(i-1)} \bigr ) = 1 - \Phi \left( \sigma\left( \epsilon - l^{(i)} \right) - \frac{2}{\sigma} \right),
\end{align}
where we have used the fact that mechanisms are i.i.d. to drop the conditional dependence on $r^{i+1}$ and $y^i$. 
Using \eqref{eq: y tilde inner prob}, we find that the condition $y_i \in \tilde{\mathcal{Y}}_i(\tilde\delta)$ 
is equivalent to 
\begin{align*}
    l^{(i)} \leq \epsilon - \frac{1}{2 \sigma^2} - \frac{1}{\sigma} \Phi^{-1}(1- \tilde\delta).
\end{align*}
Substituting this back into the stopping rule \eqref{eq: better check} yields the following condition for the release of $y_{i+1}$:
\begin{align}
\inf_x
P_{x} \left(  L^{(i)} \leq \epsilon - \frac{1}{2 \sigma^2} - \frac{1}{\sigma} \Phi^{-1}(1-\tilde\delta) \mid l^{(i-1)}
\right) \ge 1 - \theta .
\end{align}
Expanding the Gaussian probability on the left hand side and using $L^{(i)} = l^{(i-1)} + L_i$ yields
 \begin{align*}
     l^{(i-1)} \leq \epsilon - \frac{1}{\sigma^2} - \frac{1}{\sigma}\left( \Phi^{-1}(1-\tilde\delta) + \Phi^{-1}(1-\theta) \right),
 \end{align*}
$\forall x \sim x'$, which gives Proposition \ref{prop: equiv condition}.

\section{Stopping Time Computation for Mechanism-Level Filters}
\label{section: appendix MLST}
Let $T_{a}$, $T_{av}$ and $T_{RDP}$ represent the stopping times for the additive, advanced, and RDP filters respectively, as outlined in Sec. \ref{section: ML acounting}.
Recall that each mechanism $\mathcal{M}_i$ is i.i.d.
Throughout this section, we use $\epsilon_i$ as shorthand for $\epsilon_i(\delta_i)$, which is the smallest $\epsilon_i$ for which $\mathcal{M}_i$ is $(\epsilon_i, \delta_i)$-DP.
Then, the stopping time for the additive filter is
\begin{align*}
    T_{a} = \max \left\{ t \in \mathbb{N}_0: \min_{\delta_i \leq \frac{\delta}{t}} \epsilon_i \leq \frac{\epsilon}{t} \right\}.
\end{align*}
Similarly, the stopping time for the advanced filter is
\begin{align*}
    T_{av} = \max \left\{ t \in \mathbb{N}_0: \min_{\delta_i, \delta': \delta' + \frac{2 t \delta_i}{\epsilon_i e ^{\epsilon_i}} \leq \delta} f(\delta_i, \epsilon_i, \delta') \leq \epsilon \right\},
\end{align*}
where we use $f(\delta_i, \epsilon_i, \delta')$ to denote the expression in \eqref{eq: advanced f}.
Finally, for the RDP filter applied to Gaussian mechanisms, the parameter $\rho_i(\alpha)$ admits a closed form expression \cite{RDP}. 
In particular, our example with unit sensitivity yields $\rho_i(\alpha) = \frac{\alpha}{2 \sigma^2}$.
Therefore, the stopping time for the RDP filter is
\begin{align*}
    T_{RDP} = \max \left\{ t \in \mathbb{N}_0: \min_{\alpha > 1} \left( \frac{t \alpha}{2 \sigma^2} + \frac{\log \frac{1}{\delta}}{\alpha - 1} \right) \leq \epsilon \right\}.
\end{align*}
In the implementation of the simulation in Sec. \ref{sec: utility}, $T_{a}$, $T_{av}$ and $T_{RDP}$ were computed numerically.

\bibliographystyle{IEEEtran}
\bibliography{PathDP_ISIT}
\end{document}